\newcounter{subeqn} %
\pgfplotsset{
  grid style = {
    dash pattern = on 0.025mm off 0.95mm on 0.025mm off 0mm, 
    line cap = round,
    black,
    line width = 0.5pt
  },
  tick label style={font=\small},
  label style={font=\small},
  legend style={font=\footnotesize},
}
\def\arraystretch{1.7}%
\newacronym{ADMM}{ADMM}{Alternating Direction Method of Multipliers}
\newacronym{APC}{APC}{area power consumption}
\newacronym{AEC}{AEC}{average energy consumption}
\newacronym{ASE}{ASE}{area spectral efficiency}
\newacronym{AR}{AR}{average data rate}
\newacronym{BS}{BS}{base station}
\newacronym{CDN}{CDN}{content delivery network}
\newacronym{CN}{CN}{core network}
\newacronym{ICN}{ICN}{information-centric network}
\newacronym{CF}{CF}{collaborative filtering}
\newacronym{CRP}{CRP}{{C}hinese restaurant process}
\newacronym{CS}{CS}{central scheduler}
\newacronym{D2D}{D2D}{device-to-device}
\newacronym{ICIC}{ICIC}{inter-cell interference coordination}
\newacronym{LTE}{LTE}{long term evolution}
\newacronym{SISO}{SISO}{single-input single-output}
\newacronym{SBS}{SBS}{small base station}
\newacronym{SINR}{SINR}{signal-to-interference-plus-noise ratio}
\newacronym{SIR}{SIR}{signal-to-interference ratio}
\newacronym{SCN}{SCN}{small cell network}
\newacronym{SVD}{SVD}{singular value decomposition}
\newacronym{UT}{UT}{user terminal}
\newacronym{QoS}{QoS}{quality-of-service}
\newacronym{QoE}{QoE}{quality-of-experience}
\newacronym{RAN}{RAN}{radio access network}
\newacronym{PGFL}{PGFL}{probability generating functional}
\newacronym{HetNet}{HetNet}{heterogeneous network}
\newacronym{PPP}{PPP}{Poisson point process}
\newacronym{SE}{SE}{spectral efficiency}
\newacronym{SDMA}{SDMA}{space-division multiple access }
\newacronym{PDF}{PDF}{probability density function}
\newacronym{ZF}{ZF}{zero-forcing}
\newacronym{MIMO}{MIMO}{multiple-input multiple-output}
\newacronym{UE}{UE}{user}
\newacronym{EE}{EE}{energy efficiency}
\newacronym{MRC}{MRC}{maximum ratio combining}
\newacronym{CSI}{CSI}{channel state information}
\newacronym{RF}{RF}{radio frequency}
\newtheorem{definition}{Definition}
\newtheorem{theorem}{Theorem}
\newtheorem{lemma}{Lemma}
\newtheorem{remark}{Remark}
\IEEEoverridecommandlockouts \IEEEpubid{\makebox[\columnwidth]{ 978-1-5386-3531-5/17/\$31.00~\copyright~2017 IEEE \hfill} \hspace{\columnsep}\makebox[\columnwidth]{ }}
\begin{document}
\title{Downlink Performance  of Dense Antenna Deployment: To Distribute or Concentrate?}
\author{
		\IEEEauthorblockN{Mounia Hamidouche$^{ \star}$, Ejder Baştuğ$^{\dagger \diamond }$, Jihong Park$^{\circ}$,  Laura Cottatellucci$^{ \star}$, and Mérouane Debbah$^{\diamond \bullet}$ }
		\IEEEauthorblockA{
				\vspace{-0.25cm}
				\\
				\small
				$^{\star}$Communication Systems Department, EURECOM, Campus SophiaTech, 06410, Biot, France \\
				$^{\dagger}$Research Laboratory of Electronics, Massachusetts Institute of Technology, 77 Massachusetts Avenue, Cambridge, MA 02139, USA \\
				$^{\diamond}$Large Networks and Systems Group (LANEAS), CentraleSupélec, 91192, Gif-sur-Yvette, France \\
				$^{\circ}$Department of Electronic Systems, Aalborg University, Denmark \\
				$^{\bullet}$Mathematical and Algorithmic Sciences Lab, Huawei France R\&D, France \\	
				\vspace{-0.2cm} \\
				\{mounia.hamidouche, laura.cottatellucci\}@eurecom.fr, ejder@mit.edu,   \\
				jihong@es.aau.dk, merouane.debbah@centralesupelec.fr \\ 	
				\vspace{-1.25cm}
		}
\thanks{This  research  has  been  supported  by  the  ERC Starting  Grant  305123  MORE  (Advanced  Mathematical  Tools  for ComplexNetwork Engineering), and the U.S. National Science Foundation under Grant CCF-140922.}
}
\IEEEoverridecommandlockouts
\maketitle

\begin{abstract} 
Massive multiple-input multiple-output (massive MIMO) and small cell densification are complementary key 5G enablers. Given a fixed  number of the entire base-station antennas per unit area, this paper fairly compares (i) to deploy few base stations (BSs) and concentrate many antennas on each of them, i.e. massive MIMO, and (ii) to deploy more BSs equipped with few antennas, i.e. small cell densification. We observe that small cell densification always outperforms for both signal-to-interference ratio (SIR) coverage and energy efficiency (EE), when each BS serves multiple users via $L$ number of sub-bands (multi-carrier transmission). Moreover, we also observe that larger $L$ increases SIR coverage while decreasing EE, thus urging the necessity of optimal 5G network design. These two observations are based on our novel closed-form SIR coverage probability derivation using stochastic geometry, also validated via numerical simulations.
\end{abstract}
\begin{IEEEkeywords}
massive MIMO, small cells , energy efficiency, coverage probability, downlink, stochastic geometry, 5G.
\end{IEEEkeywords}
\section{Introduction}
\label{sec:introduction}
Performance improvements such as having wider coverage, higher user data rate, higher energy efficiency and lower latency are under an intensive investigation for the design of the fifth generation (5G) and beyond cellular architectures. These improvements are fundamental to achieve the dramatic growth of connected devices and the tremendous amount of data in applications such as voice, videos, and games \cite{Cisco2017}, as well as applications in wireless virtual-reality \cite{Bastug2017Toward}.
Novel innovative network technologies are used to meet the required performances. First, transmission with massive \gls{MIMO}  \cite{hoydis2013massive} is considered as a candidate technology for 5G. The key feature of this technology is the use of a large number of antennas at the base station compared to the number of users. The more antennas the base stations are equipped with, the better the performance is in terms of data rate and energy consumption \cite{wong2002performance}. A second promising technology is small cell networks \cite{hoydis2011green}, that consists of a dense number of small cell base stations in a given area. Due to the short distance between the \gls{BS} and the user terminals, small cell networks have a low path loss thus yielding a low power consumption which can improve the \gls{EE}. For a given total number of \gls{BS} antennas, interesting strategies in the deployment of these two technologies are either \emph{i) low density deployment of base stations with many antennas, i.e, massive \gls{MIMO} } or \emph{ii) a higher density deployment of \gls{BS}s equipped with fewer antennas}. Understanding which strategy is preferable is one of the goals of this work.

In fact, many works analyzed the impact of the massive amount of antennas on the \gls{EE}. In particular, the work in \cite{bjornson2015optimal} solves the \gls{EE} maximization problem for a multi-cell multi-user \gls{MIMO} network and shows that small cells yield higher \gls{EE}. In \cite{bjornson2015optimal}, the authors give insights on how the number of antennas at the \gls{BS} must be chosen in order to uniformly cover a given area and attain maximal \gls{EE}. Altough many works study massive \gls{MIMO} and small-cell densification, very few have focused on comparing their performance. Our goal is to analyze which one of the two technologies perform better in terms of the coverage probability and energy efficiency. 

A comparison has been recently presented in \cite{nguyen2016massive}. The massive \gls{MIMO} and small-cell systems were compared in terms of spectral and energy efficiency bounds.  The authors observe via simulations that for the average spectral efficiency, small-cell densification is favourable in crowded areas with moderate to high user density and massive \gls{MIMO} is preferable in scenarios with low user density. In contrast to the analysis in \cite{nguyen2016massive}, we derive exact expressions of the coverage probability and \gls{EE} by assuming other constraints on the model then we compare between massive MIMO and small cell networks in terms of these two metrics. One of the intersting constraint is to assume multi-carrier transmission in which the total bandwidth is divided into $L$ $\geq 1$ sub-bands. Then, instead of studying the downlink performance when each \gls{BS}  serve a single user at each time/frequency, we consider that  each \gls{BS} is scheduled to serve simultaneously multiple users on each sub-band. We also cancel the interference by using \ac{ZF} processing.  In addition, instead of introducing massive \gls{MIMO} and small-cell systems separately, we examine the problem with a single system model by varying the number of \gls{BS} antennas under the constraint of a fixed total number of \gls{BS} antennas per unit area.  In this work, we derive analytic expressions for the coverage probability and \gls{EE} using a stochastic geometry approach \cite{baccelli2010stochastic}. The key feature of this approach is that the base station
positions are all independent which allows to use tools from stochastic geometry.

The rest of this paper is organized as follows. Section \ref{sec:nmodel} details our system model of downlink transmission using linear processing \gls{ZF} under perfect \gls{CSI} at each base station. General expressions for coverage probability and \gls{EE} are derived in Section \ref{sec:COVERAGE}. In Section \ref{sec:numerical}, numerical results are used to validate the theoretical analysis and make comparisons between massive \gls{MIMO} and small cell densification  for both coverage probability and \gls{EE} metrics. Finally, the major conclusions and implications are given in Section \ref{sec:conclusion}. 

The following notation is used in this paper. The expectation operation with respect to a random variable and the absolute value are denoted by $\mathbb{E}\lbrace.\rbrace$ and $\vert.\vert$, respectively. We denote by $\textbf{I}_{M}$ the $M \times M$ identity matrix, and we use $\mathcal{CN}(0,\bf{\Sigma})$ to denote a circularly symmetric complex Gaussian distribution with zero-mean and covariance matrix $\bf{\Sigma}$. The Gamma function is denoted as $\Gamma(.)$. The bold lower-case letters as $\textbf{h}$ represent vectors, whereas the bold upper-case as $\textbf{H}$ are matrices. 
\section{Network Model}
\label{sec:nmodel}
\begin{figure*}[t]
	\centering
	\includegraphics[width=0.95\linewidth]{./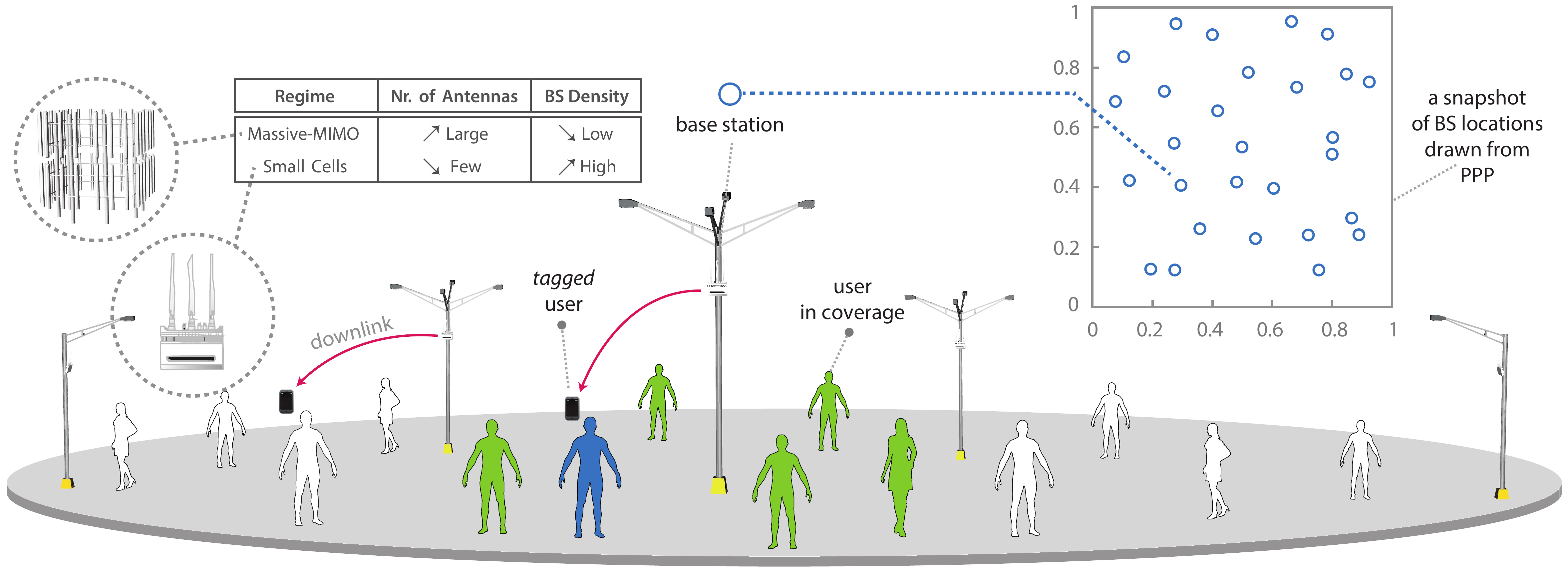}
	\caption{An illustration of the network model.}
	\label{fig:scenario}
\end{figure*} 
The cellular network consists of \gls{BS}s independently distributed according to a homogeneous \gls{PPP} $\Phi$ of intensity $\lambda_{\mathrm{BS}}$ (measured in \gls{BS}s/$\mathrm{km}^{2}$), and is depicted in Figure \ref{fig:scenario}. Each base station is equipped with an array of $M$ antennas. We consider an independent collection of single antenna mobile users, located according to another independent stationary \gls{PPP} $\Psi$ with intensity $\lambda_{\mathrm{UE}}$. We assume that each user connects to its closest \gls{BS}, namely each \gls{BS} serves the users which are located within its Voronoi cell \cite{haenggi2012stochastic}. In this section, assuming perfect \gls{CSI} at each \gls{BS} we study the signal model for downlink system. With this goal in mind, we first consider a typical user, which is connected to a tagged \gls{BS} ($\mathrm{BS}_{0}$). Since user locations are translation-invariant, we consider that the typical user is always located at the origin. This typical user's received signal $y$ is then given as
\begin{equation}\label{eq:downlink:y}
y=\underbrace{r_{0}^{-\alpha}\textbf{h}_{0}^{H} \textbf{x}_{0}}_{\mathrm{Desired \ signal} }+\underbrace{\sum\limits_{\substack{\mathrm{BS}_{i} \in \Phi \backslash \lbrace \mathrm{BS}_{0}\rbrace}}^{}{r_{i}^{-\alpha }\textbf{h}_{i}^{H} \textbf{x}_{i}}}_{\mathrm{Interference}}+n,
\end{equation}\normalsize
where the stochastic vector $\textbf{h}_{i} \in \mathbb{C}^{M}$ denotes the small scale fading between the $i$-th base station to the typical user. It follows a complex generalized Gaussian distribution denoted as $\textbf{h}_{i} \backsim \mathcal{C}\mathcal{N}(0,\textbf{I}_{M})$. The channel is considered to be noisy, with the Gaussian noise $n$ of variance $\frac{K}{P}$ added to the received signal. The variable $r_{i}$ is the distance from the typical user to its closest base station $\mathrm{BS}_{i}$ and $\alpha \geq 2$ is the path-loss exponent. We denote $\mathrm{\textbf{x}_{i}} \in \mathbb{C}^{M}$ an arbitrary symbol transmitted from the $i$-th base station. In addition to the received signal model in \eqref{eq:downlink:y}, we define the \gls{SINR} of the typical user as
\begin{equation}
\mathrm{SINR}=\dfrac{r_{0}^{-\alpha}\textbf{h}_{0}^{H} \textbf{x}_{0}}{\sum\limits_{\substack{\mathrm{BS}_{i} \in \Phi \backslash \lbrace \mathrm{BS}_{0}\rbrace}}^{}{r_{i}^{-\alpha }\textbf{h}_{i}^{H} \textbf{x}_{i}+n}}.
\end{equation}
We suppose that the \gls{BS}s must be deployed to match a given finite user density of $\lambda_{\mathrm{UE}}$ $\mathrm{UE}$s/$\mathrm{km}^{2}$, then each base station serves in average $\mathcal{K}=\frac{\lambda_{\mathrm{UE}}}{\lambda_{\mathrm{BS}}}$ users. The total bandwidth $W$ is divided into $L$ $\geq 1$ sub-bands. Therefore, $K=\frac{\mathcal{K}}{L} \leq M$ users are simultaneously served on each sub-band by each base station. We assume that the total number of antennas $\lambda_{\mathrm{BS}}M$ is fixed and should be deployed in a given area. Based on this assumption, for simplicity we set   $\lambda_{\mathrm{BS}}M$= $\lambda_{\mathrm{UE}}$, then the number of antennas in each base station is given by $M=\frac{\lambda_{\mathrm{UE}}}{\lambda_{\mathrm{BS}}}$ which is always equal or greater than $K$. We suppose that $P=\mathbb{E}[\mathrm{\textbf{x}}_{i}^{H}\mathrm{\textbf{x}}_{i}]$ is the average transmit power per base station which is given as $\frac{P_{\mathrm{max}}}{\lambda_{\mathrm{BS}}}$, where $P_{\mathrm{max}}$ is the maximum power used when all antennas are concentrated on a single base station. In this scenario, to cancel out the interference while boosting the desired signal power, each \gls{BS} applies \gls{ZF} transmission to simultaneously serve $K$ single antennas.
Let $s_{i,k} \backsim \mathcal{C}\mathcal{N}(0,1)$ be the message (the symbol) determined for the user $k$ from the $i$-th base station. Then, the $i$-th \gls{BS} multiplies the data symbol $s_{i,k}$ destined for the $k$-th user by $\mathrm{\textbf{w}_{i,k}}$. Therefore, the linear combination $\mathrm{ \textbf{x}_{i}}$ of the symbols transmitted by the $i$-th base station intended for the $K$ users is
\begin{equation}
\textbf{x}_{i}= \sum\limits_{\substack{k=1}}^{K}{\textbf{w}_{i,k} s_{i,k}},
\end{equation}
where $\textbf{w}_{i,k} \in \mathbb{C}^{M \times 1}$ is \gls{ZF} beamforming vector. Then, the received \gls{SINR} at the typical user can now be expressed as
\begin{equation}
\mathrm{SINR}=\dfrac{r_{0}^{-\alpha} \vert \textbf{h}_{0}^{H} \textbf{w}_{0,1}\vert^{2}}{\sum\limits_{\substack{\mathrm{BS}_{i} \in \Phi \backslash \lbrace \mathrm{BS}_{0}\rbrace}}^{}{r_{i}^{-\alpha}}
 \sum\limits_{\substack{k=1}}^{K}{\vert \textbf{h}_{i} \textbf{w}_{i,k}\vert^{2}}+\frac{K}{P}}= \dfrac{r_{0}^{-\alpha}S}{I_{r}+\frac{K}{P}},
\end{equation}
where $I_{r}=\sum\limits_{\substack{\mathrm{BS}_{i} \in \Phi \backslash \lbrace \mathrm{BS}_{0}\rbrace}}^{}{r_{i}^{-\alpha}}g_{i}$ with $g_{i}= \sum\limits_{\substack{k=1}}^{K}{\vert \textbf{h}_{i} \textbf{w}_{i,k}\vert^{2}}$ denotes the interference channel power and $S=\vert \textbf{h}_{0}^{H} \textbf{w}_{0,1}\vert^{2}$ is the desired channel power. Now, we introduce the metrics we will investigate in the next sections.
\begin{definition}[Coverage Probability]
The coverage probability of a typical user is the the probability that the \gls{SINR} received by the user is larger than a predefined threshold $T$ such as
\begin{equation}
\mathbb{P}_{\mathrm{cov}}(T)=\mathbb{P}(\mathrm{SINR} >T).
\end{equation}
\end{definition}
\begin{definition}[Energy Efficiency]
The \gls{EE} is defined as
\begin{equation}
\footnotesize
\mathrm{EE}= \dfrac{\mathrm{ASE}}{\mathrm{AEC}} = \dfrac{\mathrm{Area \ Spectral \ Efficiency} \;\mathrm{ [bit/symbol/km^{2}]}}{\mathrm{Average \ Energy \ Consumption} \;\mathrm{ [Joule/symbol/km^{2}]}},
\end{equation}
where the \gls{ASE} is expressed as
\begin{equation}\label{eq:ee:ase}
\mathrm{ASE}=\lambda_{\mathrm{BS}} K \mathbb{E}[R],
\end{equation}
in which $\lambda_{\mathrm{BS}}$ represents \gls{BS} density, $K$ is the number of users  that are served by the \glspl{BS} and $\mathbb{E}[R]$ is the average data rate of users. Moreover, \ac{AEC} is defined as similar to \cite{zhang2014energy}, that is
\begin{equation}\label{eq:ee:aec}
\mathrm{AEC}=\left( \dfrac{P}{\eta}+ MP_{\mathrm{c}}+K^3P_{\mathrm{pre}}+P_{\mathrm{0}}\right),
\end{equation}
where $\eta$ denotes the power amplifier efficiency, $P_{c}$ is the circuit power per antenna, which indicates the energy consumption of the corresponding RF chains. The term $K^3P_{\mathrm{pre}}$ accounts for the energy consumption for precoding which is related to the number of users served simultaneously by each base station. The term $P_{0}$ is the non-transmission power, which accounts for the energy consumption of baseband processing. 
\end{definition}
\begin{remark}
The goal in \gls{EE} is to maximize the performance while minimizing the energy consumption which are two conflicting operations in 5G.
\end{remark}
\section{Performance Analysis}
\label{sec:COVERAGE}
In this section we give the analytical expression of the coverage probability on a typical mobile user. Afterwards, we shall give the expression of the \gls{EE}. We start by stating the following Lemma that shall be used to derive our results.

\begin{lemma}[\cite{haenggi2012stochastic}] 
\label{Distance}
The probability density function (PDF) of a typical user's association distance $r_0$ is
\begin{equation}
f_{r_0}(r)=\frac{dF_{r_0}(r)}{dr}= e^{-\lambda_{\mathrm{BS}} \pi r^{2}}2\pi \lambda_{\mathrm{BS}} r.
\end{equation}

\begin{proof}
Each user is connected to its closest base station, then all the interfering base stations are farther than a distance $r$. Since the Poisson distribution helps in describing the chances of occurrence of a number of events in a given space, then the probability that no base station is closer than a distance $r$  within an area $\pi r^2$, is $e^{-\lambda_{\mathrm{BS}} \pi r^2}$, expressed as
\begin{align}
\hspace{-5pt} \mathbb{P}(r_0>r)=F_{r_0}(r) =\mathbb{P}[\mathrm{No \ BS} \ \mathrm{within}   \ \pi r^{2}]= e^{-\lambda_{\mathrm{BS}} \pi r^{2}}
\end{align}
 Therefore, the PDF results from the derivative of the cumulative distribution function $F_{r}(R)$.
\end{proof}
\label{dis}
\end{lemma}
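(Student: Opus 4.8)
The plan is to characterize $r_0$ through its distribution function and then differentiate. Since each user attaches to its nearest base station, the association distance exceeds a value $r$ precisely when no point of $\Phi$ falls inside the disk $b(0,r)$ of radius $r$ centered at the typical user. The first step is therefore to rewrite the tail event as
\begin{equation}
\{r_0 > r\} = \{\Phi(b(0,r)) = 0\},
\end{equation}
where $\Phi(\cdot)$ counts the base stations lying in a region. This geometric reduction is the crux of the argument: it converts a statement about a random minimum distance into a statement about a void (empty-region) probability of the point process.

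Next I would invoke the defining property of the homogeneous \gls{PPP} of intensity $\lambda_{\mathrm{BS}}$, namely that the number of points falling in any region of area $A$ is Poisson distributed with mean $\lambda_{\mathrm{BS}} A$, and that counts in disjoint regions are independent. Applying this to the disk $b(0,r)$, whose area is $\pi r^2$, the void probability is
\begin{equation}
\mathbb{P}(\Phi(b(0,r)) = 0) = e^{-\lambda_{\mathrm{BS}} \pi r^2}.
\end{equation}
Combining with the previous step gives the complementary distribution $\mathbb{P}(r_0 > r) = e^{-\lambda_{\mathrm{BS}}\pi r^2}$, hence the cumulative distribution function $F_{r_0}(r) = 1 - e^{-\lambda_{\mathrm{BS}}\pi r^2}$. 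Differentiating with respect to $r$ then yields the claimed density $f_{r_0}(r) = 2\pi\lambda_{\mathrm{BS}} r\, e^{-\lambda_{\mathrm{BS}}\pi r^2}$; a quick sanity check that it is nonnegative and integrates to one over $[0,\infty)$ follows from the substitution $u = \pi r^2$.

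The step I expect to require the most care is justifying that the void probability may be evaluated \emph{as seen from the typical user at the origin}. Because the user process $\Psi$ and the base-station process $\Phi$ are independent, conditioning on the presence of a user at the origin does not alter the law of $\Phi$; formally this rests on Slivnyak's theorem together with the translation invariance (stationarity) of the \gls{PPP}, which is exactly what makes fixing the typical user at the origin legitimate without introducing bias. Once this point is made explicit, the rest of the derivation is an immediate consequence of the Poisson counting law, so no further technical obstacle remains.
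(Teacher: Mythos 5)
Your proof is correct and follows essentially the same route as the paper: reduce the tail event to the void probability of the PPP in a disk, obtain $e^{-\lambda_{\mathrm{BS}}\pi r^2}$, and differentiate. You are in fact slightly more careful than the paper, which writes $\mathbb{P}(r_0>r)=F_{r_0}(r)$ where it means the complementary CDF; your explicit $F_{r_0}(r)=1-e^{-\lambda_{\mathrm{BS}}\pi r^2}$ and the appeal to Slivnyak's theorem tidy up those two points without changing the argument.
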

\subsection{Coverage Probability}
Before deriving the expression of the coverage probability, we first derive  the Laplace transform of both interference and desired signal. The desired channel power $S$ is distributed as $\Gamma(M-K+1,1)$ \cite{dhillon2013downlink}.   For the interfering signal, as $\textbf{w}_{i,k}$ is a unit-norm vector and independent of $\textbf{h}_{i}$, then $\vert \textbf{h}_{i} \textbf{w}_{i,k}\vert^{2}$ is a squared-norm complex Gaussian, which is exponential distributed. For tractability we neglect the correlation between $\textbf{w}_{i,k}$ for different $k$, then the channel gain $g_{i}$ is the sum of $K$ independent exponential distributed random variables which follows $\Gamma(K,1)$. 
%
\begin{lemma}[Laplace Transform of Interference]
\label{the:LaplaceInterferanceDownlink}

\vspace{-0.2cm}

The Laplace transform of interference $\mathcal{L}_{I_{r}}(s)$=$\mathbb{E}[e^{-sI_{r}}]$ is 

\vspace{-5pt}\small\begin{align}
\mathcal{L}_{I_{r}}(s)= \exp \left[ -\pi \lambda_{\mathrm{BS}} r_{0}^{2} \left( -1 + \mathrm{_{2}F_{1}}\left( K, \dfrac{-2}{\alpha};\dfrac{-2}{\alpha}+1;-sr_{0}^{-\alpha} \right)\right)  \right],
\end{align}
where $_{2}\mathrm{F}_{1}$ is the Gauss-Hypergeometric function.
\end{lemma}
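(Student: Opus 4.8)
The plan is to evaluate $\mathcal{L}_{I_r}(s)=\mathbb{E}[e^{-sI_r}]$ by conditioning on the serving distance $r_0$ and then separating the two independent sources of randomness in $I_r=\sum_{\mathrm{BS}_i\in\Phi\setminus\{\mathrm{BS}_0\}}r_i^{-\alpha}g_i$: the interfering channel gains $g_i$ and the interferer locations carried by the point process $\Phi$. By the closest-BS association rule every interfering $\mathrm{BS}_i$ lies at distance $r_i>r_0$, so (invoking Slivnyak's theorem) the interferers form a PPP of intensity $\lambda_{\mathrm{BS}}$ on the exterior $\{r>r_0\}$. Since the $g_i$ are i.i.d. and independent of $\Phi$, I would first push the fading expectation inside the product,
\begin{equation}
\mathcal{L}_{I_r}(s)=\mathbb{E}_{\Phi}\Bigg[\prod_{\mathrm{BS}_i\in\Phi\setminus\{\mathrm{BS}_0\}}\mathbb{E}_{g_i}\big[e^{-sr_i^{-\alpha}g_i}\big]\Bigg].
\end{equation}

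Next I would use the assumption $g_i\sim\Gamma(K,1)$, whose moment generating function is $\mathbb{E}[e^{-tg_i}]=(1+t)^{-K}$, so the inner expectation evaluates to $(1+sr_i^{-\alpha})^{-K}$. Applying the \gls{PGFL} of the homogeneous PPP with marking function $f(r)=(1+sr^{-\alpha})^{-K}$ and integrating over the exterior region in polar coordinates ($dx=2\pi r\,dr$, $r>r_0$) turns the product into an exponential,
\begin{equation}
\mathcal{L}_{I_r}(s)=\exp\!\left(-2\pi\lambda_{\mathrm{BS}}\int_{r_0}^{\infty}\Big(1-(1+sr^{-\alpha})^{-K}\Big)\,r\,dr\right).
\end{equation}

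The remaining task, and the step I expect to be the main obstacle, is to put this integral in closed form. I would expand the integrand with the binomial series $(1+sr^{-\alpha})^{-K}=\sum_{n\ge0}\frac{(K)_n}{n!}(-sr^{-\alpha})^{n}$, so that $1-(1+sr^{-\alpha})^{-K}=-\sum_{n\ge1}\frac{(K)_n}{n!}(-s)^n r^{-\alpha n}$, and integrate term by term; since $\int_{r_0}^{\infty}r^{1-\alpha n}\,dr=\frac{r_0^{2-\alpha n}}{\alpha n-2}$ (convergent for $\alpha>2$), each term contributes a multiple of $\frac{(K)_n}{n!}\frac{(-sr_0^{-\alpha})^{n}}{\alpha n-2}$. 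Setting $\beta=2/\alpha$ and $a=sr_0^{-\alpha}$ and using $\frac{2}{\alpha n-2}=\frac{\beta}{n-\beta}$, the series becomes $-\pi\lambda_{\mathrm{BS}}r_0^2\,\beta\sum_{n\ge1}\frac{(K)_n}{n!}\frac{(-a)^n}{n-\beta}$. The key identity is the Pochhammer ratio $\frac{(-\beta)_n}{(1-\beta)_n}=\frac{-\beta}{n-\beta}$ for $n\ge1$, which lets me recognize this sum as exactly $-1+{}\,_{2}\mathrm{F}_{1}\!\big(K,-\tfrac{2}{\alpha};-\tfrac{2}{\alpha}+1;-a\big)$. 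Substituting back into the exponent yields the claimed expression, and I would finally note that the restriction $\alpha>2$ used for convergence is harmless since the hypergeometric form provides the analytic continuation to the boundary case $\alpha=2$.
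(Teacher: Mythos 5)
Your proposal is correct and follows essentially the same route as the paper's proof: push the i.i.d.\ fading expectation inside the product, use the $\Gamma(K,1)$ moment generating function to get $(1+sr_i^{-\alpha})^{-K}$, and apply the PGFL over the exterior region $r>r_0$ to obtain $\exp\bigl(-2\pi\lambda_{\mathrm{BS}}\int_{r_0}^{\infty}(1-(1+sr^{-\alpha})^{-K})r\,\mathrm{d}r\bigr)$. The only difference is cosmetic: the paper closes the integral with a one-line change of variables $v^{-\alpha}\to y$ (an Euler-type integral representation of $_{2}\mathrm{F}_{1}$), whereas you expand the binomial series and resum via the Pochhammer identity $\frac{(-\beta)_n}{(1-\beta)_n}=\frac{-\beta}{n-\beta}$ — which is actually a more explicit justification of the final hypergeometric form than the paper provides.
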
\normalsize
\begin{proof}
See Appendix \ref{app:LaplaceInterferanceDownlink}.
\end{proof}

\begin{lemma}[Laplace Transform of the Desired Signal] 
\label{the:LaplaceDesiredDownlink}
The Laplace transform of the desired signal $\mathcal{L}_{S}(s)= \mathbb{E}[e^{-sS}]$ is 
\begin{equation}
\mathcal{L}_{S}(s)=\left( \dfrac{1}{1+s}\right)^{M-K+1}.
\end{equation}
\end{lemma}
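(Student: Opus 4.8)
The plan is to exploit the distributional fact already recorded in the text immediately preceding this lemma: the desired channel power $S$ follows a Gamma distribution $\Gamma(M-K+1,1)$, established via the zero-forcing array-gain result of Dhillon et al. Since the Laplace transform of a Gamma random variable has a well-known closed form, the lemma collapses to a single direct evaluation of a Gamma integral.

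First I would abbreviate the shape parameter as $a=M-K+1$ and write down the probability density function of $S$, namely $f_S(x)=\frac{x^{a-1}e^{-x}}{\Gamma(a)}$ for $x>0$. Then, invoking the definition $\mathcal{L}_S(s)=\mathbb{E}[e^{-sS}]$, I would express the expectation as $\int_0^\infty e^{-sx}f_S(x)\,dx=\frac{1}{\Gamma(a)}\int_0^\infty x^{a-1}e^{-(1+s)x}\,dx$, so that all the $s$-dependence is concentrated in the exponent of the integrand.

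Next I would evaluate the integral by the substitution $u=(1+s)x$, which rescales it into $\frac{1}{\Gamma(a)(1+s)^a}\int_0^\infty u^{a-1}e^{-u}\,du$. Recognizing the surviving integral as $\Gamma(a)$, the two Gamma factors cancel and I am left with $\mathcal{L}_S(s)=(1+s)^{-a}=\left(\frac{1}{1+s}\right)^{M-K+1}$, which is precisely the claimed expression.

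There is essentially no genuine obstacle here: the only nontrivial ingredient — the Gamma law of $S$ — is imported rather than derived, and what remains is a textbook moment-generating-function computation. The single point deserving a word of care is convergence of the integral, which requires $\mathrm{Re}(s)>-1$; this is satisfied on the range $s\geq 0$ that is actually used when this Laplace transform is substituted into the coverage-probability derivation.
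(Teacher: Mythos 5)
Your proof is correct and follows essentially the same route as the paper: both start from the $\Gamma(M-K+1,1)$ density of $S$ and reduce the Laplace transform to the integral $\frac{1}{\Gamma(M-K+1)}\int_0^\infty x^{M-K}e^{-(1+s)x}\,\mathrm{d}x$. The only (immaterial) difference is that you evaluate this integral by the substitution $u=(1+s)x$ and recognition of $\Gamma(M-K+1)$, whereas the paper cites integration by parts; your added remark on convergence for $\mathrm{Re}(s)>-1$ is a small bonus the paper omits.
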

\begin{proof}
See Appendix \ref{app:LaplaceDesiredDownlink}.
\end{proof}

Combining the previous results given in Lemmas  \ref{Distance}, \ref{the:LaplaceInterferanceDownlink} and \ref{the:LaplaceDesiredDownlink} with the proof techniques proposed in \cite{baccelli2009stochastic2}, an expression for the coverage probability can be derived and it is given in the following theorem.

\begin{theorem}[Coverage Probability in Downlink]
\label{the:covDownlink}
The coverage probability at a typical mobile user in the general cellular network model described above is

\begin{alignat}{5}
\begin{split}
\mathbb{P}_{\mathrm{cov}}(T)=\displaystyle{\int_{r_{0}>0}^{}}\displaystyle{\int_{-\infty}^{\infty}}\mathcal{L}_{I_{r_{0}}}(&i2\pi r^{\alpha}_{0}Ts) \exp\left( -\dfrac{i2\pi r_{0}^{\alpha}TK}{P}s\right)\\
& \times \dfrac{\mathcal{L}_{S}(-i2\pi s)-1}{i2\pi s}f_{r_{0}}(r_{0})\mathrm{d}s\mathrm{d}r_{0},
\end{split}
\end{alignat}
where $f_{r_{0}}(r_{0})$ is the PDF of the distance between the typical user and the tagged base station ($\mathrm{BS}_{0}$), $\mathcal{L}_{I_{r_{0}}}(.)$ is the Laplace transform of the interference and $\mathcal{L}_{S}(.)$ is the  Laplace transform of the desired signal (Lemmas \ref{Distance},  \ref{the:LaplaceInterferanceDownlink} and \ref{the:LaplaceDesiredDownlink} respectively).
\end{theorem}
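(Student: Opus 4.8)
The plan is to begin from the definition $\mathbb{P}_{\mathrm{cov}}(T)=\mathbb{P}(\mathrm{SINR}>T)$ and substitute the SINR expression $\mathrm{SINR}=r_0^{-\alpha}S/(I_r+K/P)$, so that coverage is equivalent to the event $\{S>T r_0^{\alpha}(I_r+K/P)\}$. Conditioning first on the association distance $r_0$ and on the aggregate interference $I_r$, the coverage probability becomes the expectation $\mathbb{E}_{r_0,I_r}[\mathbb{P}(S>\theta\mid r_0,I_r)]$ with the random threshold $\theta=T r_0^{\alpha}(I_r+K/P)$. The key structural fact is that $S$, $I_r$ and $r_0$ are mutually independent (the desired beamforming gain is independent of the interfering field and of the nearest-BS distance), so the three averages will factor cleanly once the complementary CDF of $S$ is rewritten in a suitable integral form.

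First I would express the tail $\mathbb{P}(S>\theta)$ through a characteristic-function inversion, mirroring the technique of \cite{baccelli2009stochastic2}. Using the elementary identity $\int_{0}^{\infty}e^{i2\pi s y}\,\mathbb{P}(S>y)\,\mathrm{d}y=(\phi_{S}(2\pi s)-1)/(i2\pi s)$, where $\phi_{S}(2\pi s)=\mathbb{E}[e^{i2\pi s S}]=\mathcal{L}_{S}(-i2\pi s)$ is the characteristic function of $S$ (Lemma \ref{the:LaplaceDesiredDownlink} at an imaginary argument), the inverse transform gives
\[
\mathbb{P}(S>\theta)=\int_{-\infty}^{\infty}\frac{\mathcal{L}_{S}(-i2\pi s)-1}{i2\pi s}\,e^{-i2\pi s\theta}\,\mathrm{d}s,
\]
which is precisely the origin of the factor $(\mathcal{L}_{S}(-i2\pi s)-1)/(i2\pi s)$ in the statement.

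Next I would substitute $\theta=T r_0^{\alpha}(I_r+K/P)$ and split the exponential as $e^{-i2\pi s\theta}=e^{-i2\pi s T r_0^{\alpha}I_r}\,e^{-i2\pi s T r_0^{\alpha}K/P}$, isolating the interference-dependent factor. Invoking Fubini to interchange the $s$-integral with the expectation over $I_r$, the interference term becomes $\mathbb{E}_{I_r}[e^{-i2\pi s T r_0^{\alpha}I_r}]=\mathcal{L}_{I_r}(i2\pi s T r_0^{\alpha})$, i.e.\ the Laplace transform of Lemma \ref{the:LaplaceInterferanceDownlink} evaluated at the imaginary argument $i2\pi s T r_0^{\alpha}$. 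Averaging finally over $r_0$ with the density $f_{r_0}(r_0)$ of Lemma \ref{Distance} yields the claimed double integral.

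The main obstacle I anticipate is the rigorous justification of the inversion and of the Fubini interchange. The integrand $(\mathcal{L}_{S}(-i2\pi s)-1)/(i2\pi s)$ has only an apparent singularity at $s=0$ that must be read as a principal value: the numerator vanishes there (since $\mathcal{L}_{S}(0)=1$) and a Taylor expansion shows the ratio tends to $\mathbb{E}[S]$, so it is in fact bounded. One must then verify enough decay in $s$ to interchange the $s$-integral, the $I_r$-expectation and the $r_0$-integral; this uses the polynomial tail of $\mathcal{L}_{S}(-i2\pi s)=(1-i2\pi s)^{-(M-K+1)}$ (integrable once $M-K+1\ge 2$) together with the boundedness of $\mathcal{L}_{I_r}$ on the imaginary axis. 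Establishing this integrability is the technical heart of the argument; the remainder is bookkeeping once independence is used to factor the three averages.
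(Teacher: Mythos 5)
Your computation is correct and reproduces the theorem exactly, and it is the same Fourier--Plancherel argument as the paper's, built on the same three lemmas; the only structural difference is the order of conditioning. The paper conditions on $S$, writes $\mathbb{P}\bigl(I_{r}+\tfrac{K}{P}<S(Tr_{0}^{\alpha})^{-1}\bigr)$ as the integral of the density of $I_{r}+\tfrac{K}{P}$ against the indicator of $[0,S(Tr_{0}^{\alpha})^{-1}]$, applies Plancherel--Parseval to that pair of square-integrable functions, and averages over $S$ last (producing $\mathcal{L}_{S}$); you condition on $I_{r}$, Fourier-invert the complementary CDF of $S$ at the random threshold $\theta=Tr_{0}^{\alpha}(I_{r}+\tfrac{K}{P})$, and average over $I_{r}$ last (producing $\mathcal{L}_{I_{r}}$). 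The two orderings are duals of one another and yield the identical double integral, so nothing of substance changes. One correction to your discussion of rigor, which is the only place your argument would fail as written: the factor $\frac{\mathcal{L}_{S}(-i2\pi s)-1}{i2\pi s}$ is \emph{not} absolutely integrable in $s$ for any value of $M-K+1$, because $\mathcal{L}_{S}(-i2\pi s)\to 0$ as $|s|\to\infty$ and the ratio therefore behaves like $-1/(i2\pi s)$ at infinity; the polynomial decay of $\mathcal{L}_{S}$ cannot justify the inversion on its own, and the principal-value issue lives at $s=\pm\infty$, not only at $s=0$. Boundedness of $\mathcal{L}_{I_{r}}$ on the imaginary axis does not rescue this either. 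The clean fix is the one the paper takes implicitly through its assumptions A1--A2: $y\mapsto\mathbb{P}(S>y)\mathbf{1}_{\{y\ge 0\}}$ is bounded and integrable, hence in $L^{2}$, and the assumed square-integrability of the density of $I_{r}+\tfrac{K}{P}$ makes the product of the two transforms lie in $L^{1}$, which is exactly what the Plancherel--Parseval identity of \cite{baccelli2009stochastic2} requires. With that substitution in your convergence argument, the proof is complete and matches Theorem \ref{the:covDownlink}.
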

\begin{proof}
See Appendix \ref{app:covDownlink}.
\end{proof}

\subsection{Energy Efficiency}
\begin{figure*}
\begin{equation}
\small
\mathrm{EE}= \dfrac{\lambda_{\mathrm{BS}} K \displaystyle{\int_{r_{0}>0}^{}}\displaystyle{\int_{-\infty}^{\infty}}\mathcal{L}_{I_{r_{0}}}(i2\pi r^{\alpha}_{0}Ts) \exp\left( -\dfrac{i2\pi r_{0}^{\alpha}TK}{P}s\right)\times \dfrac{\mathcal{L}_{S}(-i2\pi s)-1}{i2\pi s}f_{r_{0}}(r_{0})\mathrm{d}s\mathrm{d}r_{0}\log(1+T)}{\left( \dfrac{P}{\eta}+ MP_{\mathrm{c}}+K^3P_{\mathrm{pre}}+P_{\mathrm{0}}\right)}          
\label{EE}
\end{equation}
\hrule
\vspace{0.0cm}
\end{figure*}
\vspace{-0.2cm}
To facilitate the analysis of the \gls{EE}, we consider fixed modulation and coding schemes for each user by considering a fixed $\mathrm{SINR}$ threshold $T$ as in \cite{Park2017SeqCov}, providing the average rate $\mathbb{E}[R]$ as a function of downlink coverage probability as
\begin{alignat}{5}
\begin{split}
\mathbb{E}[R]=&\log(1+T) \mathbb{P}_{\mathrm{cov}}(T)\\
=&\log(1+T)\displaystyle{\int_{r_{0}>0}^{}}\displaystyle{\int_{-\infty}^{\infty}}\mathcal{L}_{I_{r_{0}}}(i2\pi r^{\alpha}_{0}Ts) \times\\
& \exp\left( -\dfrac{i2\pi r_{0}^{\alpha}TK}{P}s\right)\dfrac{\mathcal{L}_{S}(-i2\pi s)-1}{i2\pi s}f_{r_{0}}(r_{0})\mathrm{d}s\mathrm{d}r_{0},
\end{split}
\end{alignat}
By plugging the average rate expression into \eqref{eq:ee:ase}, we obtain the \gls{ASE}. Then, the expressions of \ac{EE} can be readily obtained and its final expression is given in \eqref{EE} on the top of this page. 
%

\section{Numerical Results}
\label{sec:numerical}
\bgroup
\def\arraystretch{1.6}%
\begin{table}[!t]
	\caption{\sc Default Simulation Parameters.}
	\label{tab:simparams}
	\centering
	\small
	\scalebox{0.90}{
		\begin{tabular}{|c|c|c|}
			\hline
			{\bf System Parameter}		& {\bf Symbol} 			& {\bf Value}\\
			\hline
			\hline
			 Power amplifier 			 	& $\eta$ 					& $0.318$  \\
			Circuit power per antenna 	& $P_{\mathrm{c}}$ 				& $14.8$ $\mathrm{W} $ 				\\
			Energy consumption for precoding            	& $P_{\mathrm{pre}}$ 			& $1.74$ $\mathrm{W}$  \\
			Nr. of Sub-bands & $L$  & $1$  \\
			Target $\mathrm{SINR}$	& $T$  & 1 $\mathrm{dBm}$   \\
			\hline
			\hline
			Non-transmission power & $P_{0}$  	&  $65.8$ $\mathrm{W}$  \\
			Maximum average power per BS                	& $P_{\mathrm{max}}$  	& $40$ $\mathrm{dBm}$ \\
			Users density   & $\lambda_{\mathrm{UE}}$  & $32$ per $\mathrm{km}^{2}$ \\
			\gls{BS} density	& $\lambda_{\mathrm{BS}}$  & $4$  \\
			Path-loss exponent & $\alpha$  & $4$  \\

			\hline
		\end{tabular}
	}
\end{table}

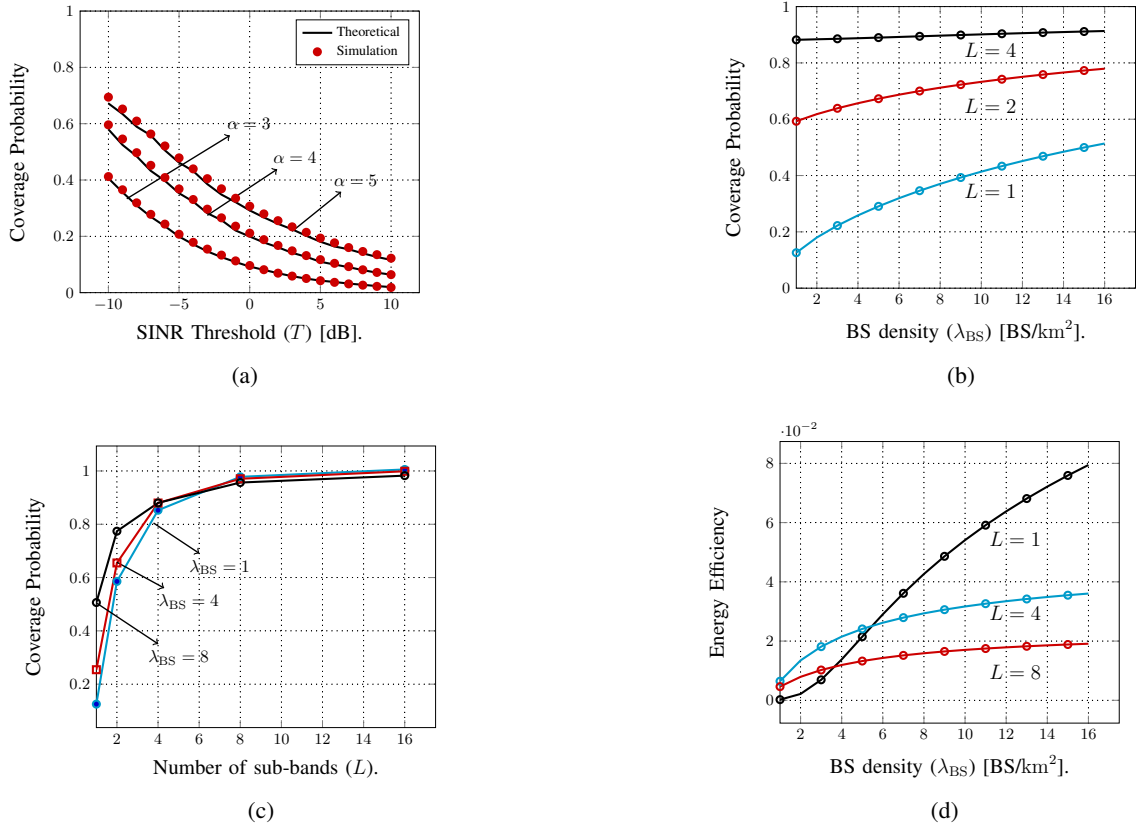
\begin{figure*}
	\centering
	\begin{subfigure}[b]{0.475\textwidth}   
		\centering 
		\hspace{-1cm}
		\scalebox{0.70}{
\begin{tikzpicture}[scale=0.94]
	\begin{axis}[
 		grid = major,
 		legend cell align=left,
 		mark repeat={1},
 		ymin=0,ymax=1,
 		legend style ={legend pos=north east},
 		xlabel={\large SINR Threshold ($T$) [dB].},
 		ylabel={\large Coverage Probability}]
 			\node at (axis cs:-0.8000, 0.5000) [anchor=south east] {};
 	\addplot+[black!60!black, solid, very thick, mark=none]
 		table [col sep=comma] {\string"././covProb-Alpha===3.csv"};
 		\addlegendentry{Theoretical} 
 		\addlegendentry{Simulation}
 		 
 	
 			\addplot+[red!80!black, solid, very thick, mark=none, only marks] 
 		table [col sep=comma] {\string"././covProb-Alpha===3-Sim.csv"};
 		\node[anchor=east] (source) at (axis cs:3.000000, 0.20000){};
 	\node (destination) at (axis cs: 7.50000, 0.40000){$\alpha = 5$};
       	\draw[thick,->](source)--(destination);
 
 		
 	\addplot+[black!60!black, solid, very thick, mark=none]
  table [col sep=comma] {\string"././covProb-Alpha===4.csv"};
 				  
 	  
	\addplot+[red!80!black, solid, very thick, mark=none, only marks] 
 		table [col sep=comma] {\string"././covProb-Alpha===4-Sim.csv"};
 		\node[anchor=east] (source) at (axis cs:-3.0000000, 0.2600000){};
       	\node (destination) at (axis cs:3.20000, 0.480){$\alpha = 4$};
       	\draw[thick,->](source)--(destination);
		 		
			\node at (axis cs:-2.200000, 0.05000) [anchor=south east] {};
 	\addplot+[black!60!black, solid, very thick, mark=none]
 				 table [col sep=comma] {\string"././covProb-Alpha===5.csv"};
 				 \node[anchor=east] (source) at (axis cs:-8.7000000, 0.3200){};
 				  	\node (destination) at (axis cs:-0.050000, 0.600){$\alpha = 3$};
       	\draw[thick,->](source)--(destination);

 		\addplot+[red!80!black, solid, very thick, mark=none, only marks]  
 		table [col sep=comma] {\string"././covProb-Alpha===5-Sim.csv"};	 

	\end{axis}
\end{tikzpicture}
		}
		\caption[]%
		{ }     
		\label{fig:coverageVsThreshold}
	\end{subfigure}
	\hfill
	\begin{subfigure}[b]{0.475\textwidth}
		\centering
		\hspace{-1cm}
		\scalebox{0.70}{
\begin{tikzpicture}[scale=0.94]
	\begin{axis}[
 		grid = major,
 		legend cell align=left,
 		mark repeat={2},
 		ymin=0,ymax=1,
 		xmin=1,
 		legend style ={legend pos=north east},
 		xlabel={\large BS density ($\lambda_{\mathrm{BS}}$) [BS/$\mathrm{km}^{2}$].},
 		ylabel={\large Coverage Probability}]
 			\node at (axis cs:12.000000, 0.300000) [anchor=south east] {\large $L = 1$};
 		\addplot+[cyan!80!black, solid, very thick, mark=o] table [col sep=comma] {\string"././covProb-T-L1.csv"};

 		\node at (axis cs:12.000000, 0.600000) [anchor=south east] {\large $L = 2$};
 		\addplot+[red!80!black, solid, very thick, mark=o, forget plot]
 				  table [col sep=comma] {\string"././covProb-T-L2.csv"};
		 		  
			\node at (axis cs:12.000000, 0.800000) [anchor=south east] {\large $L = 4$};
 		\addplot+[black!80!black, solid, very thick, mark=o, forget plot]
 				 table [col sep=comma] {\string"././covProb-T-L4.csv"};
	
	\end{axis}
\end{tikzpicture}
		}
		\caption[]%
		{ }    
		\label{fig:coverageVsDensity}
	\end{subfigure}	
	\vskip\baselineskip
	\begin{subfigure}[b]{0.475\textwidth}  
		\centering 
		\hspace{-1cm}
		\scalebox{0.70}{
\begin{tikzpicture}[scale=0.94]
	\begin{axis}[
 		grid = major,
 		legend cell align=left,
 		mark repeat={1},
 		xmin=1,
 		legend style ={legend pos=north east},
 		xlabel={\large Number of sub-bands ($L$).},
 		ylabel={\large Coverage Probability}]
 			\node at (axis cs:-2.20000, 0.5000) [anchor=south east] {\large $\alpha = 3$};
 		\addplot+[cyan!80!black, solid, very thick, mark=*] table [col sep=comma] {\string"././covProb-L-lambda=1.csv"};
 	
 			\node[anchor=east] (source) at (axis cs:3.7800000, 0.820000){};
       	\node (destination) at (axis cs:7.000000, 0.64000){$\lambda_{\mathrm{BS}}=1$};
       	\draw[thick,->](source)--(destination);
 		
 		\addplot+[red!80!black, solid, very thick, mark=square] 		
 				  table [col sep=comma] {\string"././covProb-L-lambda=4.csv"};
 				  \node[anchor=east] (source) at (axis cs:2.000000, 0.670000){};
       	\node (destination) at (axis cs:5.500000, 0.510000){$\lambda_{\mathrm{BS}}=4$};
       	\draw[thick,->](source)--(destination);
		 	  
			
 		\addplot+[black!80!black, solid, very thick, mark=o, forget plot]
 				 table [col sep=comma] {\string"././covProb-L-lambda=16.csv"};
 				 	  \node[anchor=east] (source) at (axis cs:1.00000, 0.520000){};
       	\node (destination) at (axis cs: 5.000000, 0.30000){$\lambda_{\mathrm{BS}
 				 	  }= 8$};
       	\draw[thick,->](source)--(destination);

	\end{axis}
\end{tikzpicture}
		}
		\caption[]%
		{ }
		\label{fig:coverageVsL}
	\end{subfigure}
	\quad
	\begin{subfigure}[b]{0.475\textwidth}   
		\centering 
		\hspace{-1cm}
		\scalebox{0.70}{
\begin{tikzpicture}[scale=0.94]
	\begin{axis}[
 		grid = major,
 		legend cell align=left,
 		mark repeat={2},
 		xmin=1,
 		legend style ={legend pos=north east},
 		xlabel={\large BS density ($\lambda_{\mathrm{BS}}$) [BS/$\mathrm{km}^{2}$].},
 		ylabel={\large Energy Efficiency}]
 			\node at (axis cs:14.000000, 0.05000) [anchor=south east] {\large $L = 1$};
 		\addplot+[black!80!black, solid, very thick, mark=o] table [col sep=comma] {\string"././EE-L1.csv"};

 		\node at (axis cs:14.000000, 0.02500) [anchor=south east] {\large $L = 4$};
 		\addplot+[cyan!80!black, solid, very thick, mark=o, forget plot]
 				  table [col sep=comma] {\string"././EE-L4.csv"};
		 		  
			\node at (axis cs:14.000000, 0.005000) [anchor=south east] {\large $L = 8$};
 		\addplot+[red!80!black, solid, very thick, mark=o, forget plot]
 				 table [col sep=comma] {\string"././EE-L8.csv"};
	
	\end{axis}
\end{tikzpicture}
		}
		\caption[]%
		{ }  
		\label{fig:EEVsDensity}
	\end{subfigure}
	\caption[]
	{Evolution of coverage probability with respect to the a) the target $\mathrm{SINR}$. b) the \gls{BS} density in \gls{BS}/$\mathrm{km}^{2}$, and c) number of sub-bands $L$. The subfigure d) shows the evolution of \gls{EE} with respect to the \gls{BS} density in \gls{BS}/$\mathrm{km}^{2}$.}
	\label{fig:numResults}
\end{figure*}
In this section, we conduct Monte-Carlo simulations to validate the  analytical expressions of coverage probability and  \gls{EE} of our multi-user \gls{MIMO} system. The default parameter setting is given in Table \ref{tab:simparams} and shall be used unless otherwise stated.

Figure \ref{fig:coverageVsThreshold} illustrates the coverage probability expression provided in Theorem \ref{the:covDownlink} as a function of target $\mathrm{SINR}$ for three different path-loss values: $\alpha \in \lbrace 3, 4, 5\rbrace$. The curves reveal that the coverage probability obtained by simulation behaves exactly as the analytical results which confirm the accuracy of our theoretical expressions.  Also, increasing $\alpha$ increases the coverage probability because the interference power decreases faster as a function of $\alpha$ than the power signal \cite{Andrews2011Tractable}.

Figure \ref{fig:coverageVsDensity}  shows the coverage probability as a function of \gls{BS} density. Several important observations can be made from the results of this figure. First, for any given $L$, the coverage probability can be greatly improved by increasing the \gls{BS} density, meaning that distributed network densification is preferable over massive \gls{MIMO}. Second, we note that the coverage probability increases as $L$ increases, thus showing the importance of multi-carrier transmissions. However, it is shown that for large $L$, both massive \gls{MIMO} and small cells provide the same coverage  which is confirmed according to Figure \ref{fig:coverageVsL}. This phenomenon occurs since the number of users $K$ served by each \gls{BS} decreases and become very small compared to the number of its own antennas $M$.

Figure \ref{fig:EEVsDensity} shows the impact of varying $\lambda_{\mathrm{BS}}$ on the \gls{EE}. The curves show that for any given $L$, the \gls{EE} can also be improved by increasing the \gls{BS} density, meaning that small cells is also preferable over massive \gls{MIMO}. In contrast with the coverage probability, we can observe that in small cells scenario, decreasing the number of sub-bands $L$ increases the \gls{EE} but in massive \gls{MIMO} scenario, large number of sub-bands provides the highest \gls{EE}. Finally, we notice also that both massive \gls{MIMO} and small cells provide the same performance gain when $L$ becomes very large which means that the \gls{EE} increase when the number of users served by the base station approach the number of its base station antennas $M$.  All these observations show that the coverage and \gls{EE} are conflicting such that improvements in one objective lead to degradation
in the other objective for a fixed number of sub-bands $L $.

\section{Conclusions}
\label{sec:conclusion}
Our proposed model was based on stochastic geometry, where the \gls{BS} and user locations were distributed according to \gls{PPP}. Using tools from stochastic geometry, we derived the coverage probability and \gls{EE} expressions for downlink scenario. The coverage probability expression was validated via Monte-Carlo simulations. The comparison showed that for any given sub-band $L$, small cell densification is preferable over massive \gls{MIMO} if both coverage probability and \gls{EE} should be increased. However, increasing $L$ improves the coverage probability but decreases the \gls{EE} which shows that the two metrics are conflicting. An interesting future work is  therefore to introduce a multi-objective optimization framework and  find (possibly) optimal number of sub-bands and \gls{BS} density that maximize the coverage probability and \gls{EE} jointly for the uplink and downlink scenarios. Future work will also look at multiple antenna terminals which is still an open topic. 
\bibliographystyle{IEEEtran}
\bibliography{references}
%
\appendices
\section{Proof of Lemma \ref{the:LaplaceInterferanceDownlink} }
\label{app:LaplaceInterferanceDownlink}
Let $f(g)$ and $f(S)$ denote the PDF of $g_{i}= \sum\limits_{\substack{k=1}}^{K}{\vert \mathrm{\textbf{h}}_{i} \mathrm{\textbf{w}}_{i,k}\vert^{2}}$ and $S=\vert \mathrm{\textbf{h}}_{0}^{H} \mathrm{\textbf{w}}_{0,1}\vert^{2}$ respectively. The Laplace transform of the interference is $\mathcal{L}_{I_{r}}(s)= \mathbb{E}[e^{-sI_{r}}]$, where the average is taken over both the spatial \gls{PPP} and the interference distribution is expressed as follows:
\begin{alignat}{5}
\begin{split}
\mathcal{L}_{I_{r}}(s)& = \mathbb{E}_{\Phi,\lbrace g_{i} \rbrace} \left[ \exp\left( -s \sum\limits_{\substack{\mathrm{BS}_{i}\in \Phi\backslash \lbrace \mathrm{BS}_{0}\rbrace}}^{}{r_{i}^{-\alpha}g_{i}}\right) \right],\\
& =\mathbb{E}_{\Phi, \lbrace g_{i} \rbrace}\left[ \underset{\mathrm{BS}_{i}\in \Phi \setminus\lbrace \mathrm{BS}_{0}\rbrace}{\prod} \left[ \exp(-sg_{i}r_{i}^{-\alpha})\right]\right],\\
 & \overset{(a)}{=}\mathbb{E}_{\Phi}\left[ \underset{\mathrm{BS}_{i}\in \Phi \setminus\lbrace \mathrm{BS}_{0}\rbrace}{\prod} \mathbb{E}_{g_{i}}\left[ \exp(-sg_{i}r_{i}^{-\alpha})\right]\right],\\
 & \overset{(b)}{=} \mathbb{E}_{\Phi} \left[ \underset{\mathrm{BS}_{i}\in \Phi \setminus\lbrace \mathrm{BS}_{0}\rbrace}{\prod} \frac{1}{\Gamma(K)} \displaystyle{\int_{0}^{\infty}}e^{-g_{i}(sr_{i}^{-\alpha}+1)}g_{i}^{K-1} \mathrm{d}g_{i}\right],\\
 &\overset{(d)}{=} \mathbb{E}_{\Phi} \left[ \underset{\mathrm{BS}_{i}\in \Phi \setminus\lbrace \mathrm{BS}_{0}\rbrace}{\prod} \frac{1}{(1+sr_{i}^{-\alpha})^{K}} \right],\\
 & \overset{}{=}\exp \left( -2\pi \lambda_{\mathrm{BS}} \displaystyle{\int_{r_{0}}^{\infty}} \left( 1-\frac{1}{(1+sv^{-\alpha})^{K}}\right)v\mathrm{d}v\right)
\end{split}
\end{alignat} 
where the step (a) follows from the i.i.d distribution of $g_{i}$ and further independence from the point process $\Phi$. The step (b) follows from the PDF of $g_{i}$ $\backsim$ $\Gamma(K,1)$ given as
\begin{equation}
 f(g)= \frac{1}{\Gamma(K)}g^{K-1} e^{-g_{i}}.
\end{equation}
Moreover, the step (c) follows from the computation of the integral by the means of integration by parts. The last step follows from the probability generating functional of the PPP with intensity $\lambda$ \cite{haenggi2012stochastic}, which states that for some function $f(x)$ we have
\begin{equation}
\mathbb{E}\left[ \underset{\mathrm{BS}_{i}\in \Phi \setminus\lbrace BS_{0}\rbrace}{\prod} f(x) \right] = \exp \left(-\lambda \displaystyle{\int_{\mathbb{R}^{2}}^{}}(1-f(x)\mathrm{d}x)\right) .
\end{equation}
The inside integral can be evaluated by using the change of variables $v^{-\alpha} \rightarrow y$ and we obtain the result. \qed
%
\section{Proof of Lemma \ref{the:LaplaceDesiredDownlink} }
\label{app:LaplaceDesiredDownlink}
The Laplace transform of the desired signal is $\mathcal{L}_{S}(s)= \mathbb{E}[e^{-sS}]$, where the average is taken over the desired signal distribution expressed as follows:
\begin{alignat}{3}
\begin{split}
\mathcal{L}_{S}(s)&=\mathbb{E}_{S}[e^{-sS}]\\
& \overset{}{=} \frac{1}{\Gamma(M-K+1)} \displaystyle{\int_{0}^{\infty}}S^{M-K} e^{-S(1+s)}\mathrm{d}S,\\
&\overset{a}{=}\left( \frac{1}{1+s}\right)^{M-K+1},
\end{split}
\end{alignat}
where the first step follows from the PDF of the desired signal
\begin{equation}
f(S)=\frac{1}{\Gamma(M-K+1)}S^{M-K}e^{-S}\mathrm{d}S.
\end{equation} 
The step (a) follows from the computation of the integral by the means of integration by parts. \qed
%
\section{Proof of Theorem \ref{the:covDownlink} }
\label{app:covDownlink}
The first part of the proof follows by conditioning on the nearest \gls{BS} being at a distance $r_{0}$ from the typical user. Then, the probability of coverage is
\begin{equation}
\mathbb{P}_{\mathrm{cov}}(T,\lambda_{\mathrm{BS}},\alpha)=\displaystyle{\int_{r_{0}>0}^{}}e^{-\pi \lambda_{\mathrm{BS}} r_{0}^{2}} \ \mathbb{P}(\frac{r_{0}^{-\alpha}S}{I_{r}+\frac{K}{P}}> T) \ 2\pi \lambda_{\mathrm{BS}} r_{0}\mathrm{d}r_{0}.
\label{eq1}
\end{equation}

To evaluate $\mathbb{P}(\frac{r_{0}^{-\alpha}S}{I_{r}+\frac{K}{P}}> T)$, we use the proof techniques proposed in \cite{Andrews2011Tractable} and \cite{baccelli2009stochastic2}. Some assumptions are required for this computation:
\begin{itemize}
	\item[A1)] The desired signal $S$ admits a square integrable density.
	\item[A2)] Either the interference $I_{r}$ or the noise admits a density which is square integrable.
\end{itemize}

The interference and the noise are independent, then the second assumption imply that $I_{r}+\frac{K}{P}$ admits a PDF $ f_{I_{r}+\frac{K}{P}}(y)$ that is square integrable. Therefore, the coverage probability is expressed as follows:
\begin{alignat}{5}
\begin{split}
\mathbb{P}\left(\mathrm{SINR} >T \right) & =\mathbb{P}\left( \dfrac{r^{-\alpha}S}{I_{r}+\frac{K}{P}} >T \right),\\
& = \mathbb{P}\left( I_{r}+\frac{K}{P}  < (T r_{0}^{\alpha})^{-1}S\right), \\
& = \mathbb{E}_{S}\left\{ \mathbb{P}\left( I_{r}+\frac{K}{P}  < (T r_{0}^{\alpha})^{-1}S\right)\right\}, \\
& \overset{(a)}{=} \mathbb{E}_{S} \left\{  \displaystyle{\int_{0}^{S(Tr_{0}^{\alpha})^{-1}}} f_{I_{r}+\frac{K}{P}}(y)\mathrm{d}y\right\},\\
& = \mathbb{E}_{S} \left\{ \displaystyle{\int_{-\infty}^{+\infty}} f_{I_{r}+\frac{K}{P}}(y) \textbf{1}_{\left[ 0 \leq y \leq S(Tr_{0}^{-\alpha})^{-1}\right]} \mathrm{d}y \right\}. 
\end{split}
\end{alignat}
where the step (a) above follows from the definition
\begin{equation}
\mathbb{P}(a \leq X \leq b)=\displaystyle{\int_{a}^{b}f(x)\mathrm{d}x},
\end{equation}
where $f$ is the density function of the variable $X$. Using the Plancheral-Parseval theorem \cite{bremaud2013mathematical}, we obtain
\small
\begin{alignat*}{2}
 &= \mathbb{E}_{S} \left\{ \displaystyle{\int_{-\infty}^{+\infty}} e^{-2\pi i sI_{r}} \ \exp\left( -2i\pi\frac{K}{P} s\right)\dfrac{e^{2i \pi y s(Tr_{0}^{\alpha})^{-1}}-e^{2i \pi ys \times 0}}{2i\pi s} \mathrm{d}s \right\},\\
& = \mathbb{E}_{S} \left\{ \displaystyle{\int_{-\infty}^{+\infty}} \mathcal{L}_{I_{r}}(-2i\pi Tr_{0}^{\alpha}s) \ \exp\left( -2i\pi\frac{K}{P}Tr_{0}^{\alpha} s\right)\frac{e^{2i \pi y s}-1}{2i\pi s} \mathrm{d}s \right\}.
\end{alignat*}
\normalsize
Using Fubini's theorem \cite{leon1993fubini}, and moving the expectation inside
\small
\begin{alignat*}{3}
& =  \displaystyle{\int_{-\infty}^{+\infty}} \mathbb{E}_{S} \left\{ \mathcal{L}_{I_{r}}(-2i\pi Tr_{0}^{\alpha} s) \ \exp\left( -2i\pi Tr_{0}^{\alpha} \frac{K}{P} s\right) \dfrac{e^{2i \pi y s}-1}{2i\pi s}  \right\} \mathrm{d}s,\\ 
&=  \displaystyle{\int_{-\infty}^{+\infty}} \mathcal{L}_{I_{r}}(-2i\pi Tr_{0}^{\alpha} s) \ \exp\left( -2i\pi Tr_{0}^{\alpha} \frac{K}{P} s\right)\dfrac{\mathcal{L}_{S} (e^{2i \pi y s})-1}{2i\pi s} \mathrm{d}s.
\label{eq2}
\end{alignat*}
\normalsize
Combining the last expression and (\ref{eq1}) gives the result stated in Theorem \ref{the:covDownlink}. \qed
\end{document}